\definecolor{olivegreen}{RGB}{0,153,0}
\newcommand{\fecpra}{{\sc {FusEd-PageRank}}}
\begin{document}

\title{\fecpra{}: Loop-Fusion based Approximate PageRank}

\author{
Shalini Jain*, Rahul Utkoor*, Hemalatha Eedi \linebreak
Sathya Peri, Ramakrishna Upadrasta \linebreak
\email{\{cs15resch11010, cs14btech11037, cs15resch11002\}@iith.ac.in} \linebreak
\email{\{sathya\_p, ramakrishna\}@cse.iith.ac.in}
}

\institute{Department of Computer Science and Engineering, IIT-Hyderabad}

\maketitle

\begin{abstract}

PageRank is a graph centrality metric that gives the importance of each node in a given graph.
The PageRank algorithm provides important insights to understand the behavior of nodes through the connections they form with other nodes. It is an iterative algorithm that ranks the nodes in each iteration until all the node values converge.

The PageRank algorithm is implemented using sparse storage format, which results in irregular memory accesses in the code.
This key feature inhibits optimizations to improve its performance, and makes optimizing the PageRank algorithm a non-trivial problem. 
In this work we improve the performance of PageRank algorithm by reducing its irregular memory accesses.

In this paper, we propose \fecpra{} algorithm, a compiler optimization oriented approximate technique that reduces the number of irregular memory accesses in the PageRank algorithm, improving its locality while making the convergence of the algorithm faster with better accuracy in results.
In particular, we propose an approximate PageRank algorithm using Loop-Fusion.

We believe that ours is the first work that formally applies traditional compiler optimization techniques for irregular memory access in the PageRank algorithm.
We have verified our method by performing experiments on a variety of datasets: LAW graphs, SNAP datasets and synthesized datasets.
On these benchmarks, we have achieved a maximum speedup (vs. \texttt{-O3} optimization) of $2.05\times$, $2.23\times$, $1.74\times$ with sequential version, and $~4.4\times$, $~2.61\times$, $~4.22\times$ with parallel version of \fecpra{} algorithm in comparison with Edge-centric version of PageRank algorithm~\cite{ajayPanyala}.

\keywords{Loop-Fusion \and PageRank \and Gauss-Seidel \and Locality \and Irregular-Memory-Access.}
\end{abstract}
\section{Introduction}
Real-world Networks like social networks, road networks, collaboration networks, etc., represented as graphs, conveyed many features in various scientific and engineering applications. In the current era, graphs are becoming more prominent, with complex and irregular structures. Extracting those significant properties necessitates a standard benchmark to overcome the irrelevant information found in the graphs.

Processing large graphs and finding properties between the nodes is compute-intensive. Centrality metrics help in analyzing large graphs based on the properties of the nodes. Various centrality metrics~\cite{DBLP:conf/icpp/BaderM06a} have been proposed in the literature that focus on various properties of the graphs. Some important metrics are PageRank-centrality~\cite{Page1999ThePC}, Betweenness-centrality~\cite{Freeman1977Set}, Closeness-centrality~\cite{1950ASAJ...22..725B}\cite{RePEc:spr:psycho:v:31:y:1966:i:4:p:581-603}, Degree-centrality~\cite{Bonacich}, EigenVector-centrality~\cite{Page1999ThePC}\cite{Bonacich}, etc.

In this work, we focus on Google's PageRank centrality metric~\cite{Page1999ThePC} to address the primary concerns related to faster convergence, storage efficiency, and data locality when computing on a shared-memory architecture.

PageRank is a standard benchmark that derives the importance of nodes in a graph from neighbor nodes. This fundamental graph algorithm ranks the relative importance of each of the nodes iteratively based on the formula mentioned in  Eqn.~\ref{eq}.
\begin{equation}
    pr_u = \frac{1\!-\!d}{n} + d * \sum_{v \in inlink(u)}\frac{pr_v}{d_o(v)}
     \label{eq}
\end{equation}
The node $u$'s PageRank $pr_u$ in $(i+1)^{th}$ iteration is computed from the PageRank values of its inlink vertices $inlink(u)$ from $i^{th}$ iteration divided by its out degree $d_o(v)$. A dampening factor $d$ assigns a uniform probability in jumping to any page, thus ensuring strong connectedness.

In the PageRank algorithm, each node of a graph is initialized with some random value. As the algorithm proceeds iteratively, the difference between the current PageRank value and the previous PageRank value decreases. The decrease in the error value of each node follows the convergence property. When the difference between the PageRank values from two consecutive iterations becomes less than the threshold value, the change in the PageRank value becomes negligible.

The storage of graphs plays an essential role in how we process the data in the PageRank algorithm. Two available storage formats for graphs are adjacency matrix and adjacency list~\cite{Cormen2001introduction}. An adjacency matrix storage format is not recommended in shared-memory architectures with large graphs because of its size limitations and data sparsity.

Consequently, we focus only on sparse-matrix based adjacency list storage format. The storage formats for sparse-matrices are broadly categorized as two types. 1) Vertex-centric, and 2) Edge-centric. 

In the Vertex-centric PageRank computation~\cite{DBLP:conf/sigmod/MalewiczABDHLC10}\cite{DBLP:conf/uai/LowGKBGH10}\cite{DBLP:conf/cidr/WangXDG13}\cite{DBLP:conf/sosp/NguyenLP13} the max size of the data structure used is $\mathcal{O}(V)$, where $V$ is the total number of vertices.
Similarly, in Edge-centric versions~\cite{ajayPanyala}\cite{8091048}, the max size of the data structure used is $\mathcal{O}(V+E)$, where $E$ is the total number of edges. The Edge-centric versions show the improvement in performance by overcoming the poor locality of vertex-centric PageRank computational model. 
We consider the edge-centric algorithm proposed by Panyala et al.~\cite{ajayPanyala} as our baseline and apply loop-fusion compiler optimization, along with reordering of the statements.

In a memory-efficient storage format, accessing the neighboring nodes involve what are termed as \textit{irregular memory accesses}, ones like \texttt{a[b[i]]}. Traditional compilers disable loop-fusion on the loops having irregular memory accesses.
In this work, we propose a loop-fusion technique for the loops with irregular memory accesses.

Loop-fusion~\cite{DBLP:books/crc/Compiler07} is a compiler optimization that works by merging two or more loops into one. Loops are suitable candidates for fusion if they share the same iteration space with no data dependences. It avoids the overhead of loop control structure and may increase the instruction-level parallelism and improve locality, ultimately improving overall performance. Standard compilers like LLVM~\cite{llvm}, GCC~\cite{10.5555/3133373}, ICC~\cite{IccURL}
disable loop-fusion on loops when the memory accesses are irregular to preserve semantic correctness.

We propose \fecpra{} (loop \textbf{Fus}ion based \textbf{Ed}ge-Centric \textbf{PageRank}) an approximate PageRank algorithm with loop-fusion optimization by statement-reordering that improves locality and leads to faster convergence. Our proposed loop-fusion technique follows the principles of the Gauss-Seidel method. Gauss-Seidel method is an improved version of the Jacobi method for solving systems of linear equations iteratively. Jacobi method uses the values from the preceding iteration. In contrast, the Gauss-Seidel method uses the preceding iteration’s values, and the recently updated values from the current iteration.

\vspace{-1.5\baselineskip}
\begin{table}[h!]
\tiny
\begin{center}
\caption{In (a) and (b), we show the Jacobi and Gauss-Seidel examples for a simple linear recurrence equation. In (c) and (d) we show Jacobi and Gauss-Seidel examples for Edge-Centric Pagerank equation}
\begin{tabular}{||c|c||} 
 \hline
 \cellcolor[HTML]{99BADD} (a) Jacobi\cite{10.1145/1052934.1052938}& \cellcolor[HTML]{99BADD} (b) Gauss-Seidel\cite{Gauss-Seidel} \\ 
 \hline\hline
 \begin{lstlisting}[mathescape=true, language=Python]

for $u$ = 1 to $n$:
  for $v \in inlink(u)$:
    $x_u^{k} +\!= d * x_v^{k-1}$
    $\,$ 
    $\,$ 
    $\,$ 
    
\end{lstlisting} & 
\begin{lstlisting}[mathescape=true, language=Python]
for $u$ = 1 to $n$:
  for $v \in inlink(u)$:
    if u $\le$ v
      $x_u^k +\!= d * x_v^{k-1}$
    else 
      $x_u^k +\!= d * x_v^k$
\end{lstlisting} \\
\hline\hline
\hline
\cellcolor[HTML]{99BADD} (c) Edge-centric (Jacobi)~\cite{ajayPanyala} \ & \cellcolor[HTML]{99BADD} (d) \cellcolor[HTML]{99BADD} \fecpra \textbf(Gauss-Seidel) \\
\hline
 \begin{lstlisting}[mathescape=true, language=Python]
for $u$ = 1 to $n$:
  $contrib = d * \frac{x_v^{k-1}}{d_o(v)}$
  for $v \in outlink(u)$:
    $CL(outlink(v)) = contrib$

for $u$ = 1 to $n$:
  for $v \in inlink(u)$:
    $x_u^{k} +\!= CL(v)$
\end{lstlisting} & 
\begin{lstlisting}[mathescape=true, language=Python]
for $u$ = 1 to $n$:
  for $v \in inlink(u)$:
    if $u \le v$
      $x_u^{k} +\!= CL(v)^{k-1}$
    else
      $x_u^{k} +\!= CL(v)^{k}$
  $contrib = d * \frac{x_u^{k-1}}{d_o(u)}$
  for $v \in outlink(u)$:
    $CL(outlink(v)) = contrib$
\end{lstlisting} \\
 \hline
\end{tabular}
\label{table:jacobi_GaussSeidel}
\end{center}
\end{table}
\vspace{-1.5\baselineskip}

In Table~\ref{table:jacobi_GaussSeidel}, we demonstrate the difference between the (a) Jacobi and (b) Gauss-Seidel method using a simple loop that computes a linear recurrence equation; and the loops from our baseline (c) and proposed algorithms (d).

In first row, (a) corresponds to Jacobi updates, where we access $x_v$ values from $(k-1)^{th}$ iteration. In the corresponding Gauss-Seidel version in (b), if $u > v$, we access the $x_v$ values from $k^{th}$ iteration, else from $(k-1)^{th}$ iteration.

Similarly, in the second row, on the left side, the contributions are pre-computed and stored in $CL$ array in the first loop and later used in the second loop for PageRank computation, hence this algorithm follows Jacobi approach.
These two loops executed in phases will restrict users to have a working/correct version of the Gauss-Seidel version.
The right side of the second row represents a fused version (Gauss-Seidel) of the Edge-centric (Jacobi). Here, we compute the PageRank value and immediately write the contribution values to the $CL$ array. Writing to the $CL$ array will allow us to use the updated contribution values in the later computations. If $u > v$, we use the updated $CL$ values from $k^{th}$ iteration, else, we access the $CL$ values from $(k-1)^{th}$ iteration.

Although the applicability of the Gauss-seidel technique on PageRank computation is a proven technique~\cite{Gauss-Seidel},~\cite{DBLP:conf/amcc/SilvestreHS18}. We are the first one to propose fusing the loops when memory accesses are irregular, which enables the latent Gauss-Seidel approximation.

We discuss the complete details of the baseline algorithm in Section~\ref{sec:baseline_algorithm}, and the proposed algorithm \fecpra{} in Section~\ref{sec:proposed_algorithm} of our paper respectively.

The following are the major contributions of our work:
\begin{enumerate}
    \item We apply the Loop-Fusion optimization technique to the edge-centric PageRank algorithm, resulting in a new \fecpra{} algorithm, that has better data re-use and locality characteristics.
    \item Our \fecpra{} algorithm effectively uses the Gauss-Seidel style for updating PageRank values, (instead of the Jacobi style), resulting in a faster convergence.
    \item We prove that the termination properties of our proposed approximate \fecpra{} algorithm  are similar to the termination properties of standard PageRank algorithm.
    \item We present experimental results on a large variety of input graphs and prove the scalability of our proposed technique.
\end{enumerate}

We organize the rest of the paper as follows.
In Sec.~\ref{sec:Related_work}, we discuss some relevant previous works on PageRank algorithm. 
In Sec.~\ref{sec:baseline_algorithm}, we discuss about the edge-centric PageRank algorithm. Sec.~\ref{sec:proposed_algorithm} presents our proposed algorithm, and its convergence properties. 
In Sec.~\ref{sec:proofs}, we discuss the proofs for the correctness and termination conditions of our proposed algorithm. 
In Sec.~\ref{sec:experimentation} we discuss experiments and analysis of the results.
In Sec.~\ref{sec:conclusion}, we conclude our work and discusses future directions.

\section{Related Work}
\label{sec:Related_work}
Google's first algorithm to search for relevant web pages is the PageRank algorithm proposed by Page et al.~\cite{Page1999ThePC}. This standard benchmark is an iterative algorithm that computes the rank of a webpage from the ranks obtained by its neighbour pages linking to it. A page connected with high ranked pages will possess higher PageRank value. Each page is rendered as a \textit{vertex} in a graph and the links between them are represented as the \textit{edges}. 

The Jacobi~\cite{10.1145/1052934.1052938} is a simple stationary iterative method that is best suited to solve linear system of equations when compared with the Gaussian elimination method. 
The PageRank values are computed in the current iteration by accessing the neighbouring nodes PageRank values from the previous iteration, and this process continues until the convergence results.

The Gauss-seidel method is an improved version of the Jacobi iterative method, where the PageRank values are computed from taking the recent values updated in the current iteration. Arasu A. et al.,~\cite{Gauss-Seidel} applied Gauss-seidel iterative method on the PageRank algorithm and proved a significant speed improvement and faster convergence when compared with Jacobi iterative method.  

Real-world graphs have different structural properties that impact the computation of an algorithm. Addressing these challenges, Garg et al.~\cite{STIC-D} proposed STIC-D, i.e., four algorithmic pre-processing techniques to optimize the parallel vertex-centric PageRank computation on real-world graphs. Computing \textbf{S}trongly Connected Components and traversing them in a \textbf{T}opological order leads to faster processing on PageRank computation. Identifying the nodes (\textbf{I}dentical nodes) with similar properties eliminates the redundant PageRank computation is the second optimization technique. 
Bypassing the PageRank computations of the nodes that form \textbf{C}hains in a directed path and eliminating the \textbf{D}ead Nodes that do not make contributions are the other two optimization techniques the authors proposed to accelerate the PageRank computations. These optimization techniques show improvements with real-world datasets; however, they do not take benefits of data locality because of the limitations in the vertex-centric PageRank computational model.

The two primary ways to implement a parallel PageRank algorithm on shared memory architecture are the Blocking mechanism and the Non-Blocking mechanism. The Blocking mechanism deals with blocking techniques like barriers, locks etc. to prevent simultaneous access by multiple threads in writing and reading the PageRank values of a node. The Non-Blocking mechanism allows updating the PageRank values atomically using hardware instructions such as compare-and-swap etc. This Non-Blocking mechanism leads to the approximate behavior of the PageRank algorithm.
Eedi et al.~\cite{Hemlatha.et.al.}, proposed a Non-Blocking implementation of the PageRank algorithm to eliminate barriers and compared their results with the Blocking implementation of Garg et al.~\cite{STIC-D} work. Their method proved the correctness of the algorithm under concurrent data access. The authors also proved the correctness with proper termination conditions.

With the increase in the demand for computational performance, there is a massive demand for approximate computing techniques to process large-scale graphs with complex structures. Panyala et al.~\cite{ajayPanyala} present approximate computing techniques on graph iterative algorithms - PageRank Algorithm and Community Detection. A Loop Perforation technique is applied for computing the PageRank value of nodes by gradually skipping some portion of operations on the neighbor nodes with minor importance from computations. These strategies reduces the number of memory writes and thereby reduces irregular memory access count.

In an iterative graph algorithm, convergence is an essential factor that directly affects the overall computation cost. Silvestre et al.~\cite{DBLP:conf/amcc/SilvestreHS18} proposed the PageRank algorithm based on asynchronous Gauss-Seidel iterations that yield faster convergence when compared with the conventional power iterative method.

The amount of time taken to access the required data from memory plays a major role in processing massive graph datasets with irregular structures and with varied data sizes and cache sizes. Beamer et al.~\cite{DBLP:conf/ipps/BeamerAP17} presented an optimization technique called \textit{propagation blocking} to reduce memory communication thereby improving locality in computing the PageRank algorithm.

\section{Baseline Algorithm}
\label{sec:baseline_algorithm}
This section mainly focuses on two storage formats of the PageRank algorithm: 1) Vertex-Centric, and 2) Edge-Centric. In Table~\ref{table:storage_formats}, we do a comparison of the two storage formats.

\vspace{-1.5\baselineskip}
\begin{table}[h!]
\tiny
\begin{center}
\caption{Comparison of two storage formats}
\begin{tabular}{||l | c | c||} 
 \hline
 \cellcolor[HTML]{99BADD} \textbf{Property} & \cellcolor[HTML]{99BADD} \textbf{Vertex-Centric} & \cellcolor[HTML]{99BADD} \textbf{Edge-Centric} \\ [0.5ex] 
 \hline\hline
 Redundant computations & \cellcolor[HTML]{FC9191}Yes & No \\ 
 \hline
 Poor locality & \cellcolor[HTML]{FC9191}Yes & No \\
 \hline
 Storage space complexity & $\mathcal{O}(V)$ & \cellcolor[HTML]{FC9191}$\mathcal{O}(V+E)$ \\
 \hline
\end{tabular}
\label{table:storage_formats}
\end{center}
\end{table}
\vspace{-1.5\baselineskip}

Let $G(V, E)$ be a directed graph, where input tuple $(u,v) \in E$ represents the edge corresponding to the given graph. 
For any pair of nodes, a link from node $u$ to node $v$ i.e., $u \rightarrow v$ is termed as an $outlink$ from $u$ and $inlink$ to $v$.

In the Vertex-centric PageRank algorithm, for computing the PageRank of each node, we access its respective $inlink$ neighbour's PageRank value and compute the contribution value.

\vspace{-1.2\baselineskip}
\begin{figure}[h]
    \centering
    \includegraphics[scale=0.5]{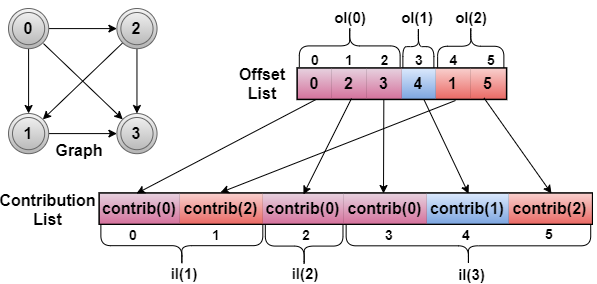}
    \vspace{0.5\baselineskip}
    \caption{Relation between $offsetList$ and $contributionList$}
    \label{fig:offset_contrib_list}
\end{figure}
\vspace{-1.2\baselineskip}

The Edge-centric PageRank algorithm, proposed by Panyala et al.~\cite{ajayPanyala}, divides the PageRank algorithm in two phases.
In the first phase, for each node, the contribution value is computed and stored them in the $contributionList$ array. 
In the second phase, $contributionList$ array is used for computing the PageRank of each node, where we access the contribution values contiguously from $contributionList$ array. The algorithm uses extra memory to store the contribution value of each node to its respective out-linking neighbours.
Pre-computing the contribution values in first phase results in elimination of redundant computations as compared with Vertex-centric model. Also, the contribution values are accessed contiguously while computing the PageRank of each node, which results in improved locality.
Algorithm~\ref{algorithm:PageRankEdge}, is our baseline Edge-centric PageRank algorithm, which uses extra memory for storing contribution values, eliminates redundant computations and improves locality by accessing the contribution values contiguously.

In Fig~\ref{fig:offset_contrib_list} we illustrate the $contributionList$ and $offsetList$ data-structures using the example graph. Here, inlinks of vertex $3$ are $\{0, 1, 2\}$, and we store their contributions in $contributionList$ at index $3, 4, 5$ respectively. Vertex $3$ becomes $outlink$ to nodes $\{0, 1, 2\}$; and we store the indices $3, 4, 5$ at index $2, 3, 5$ of $offsetList$ array. In $contributionList$ we store the contributions of node $inlinks$ contiguously and in $offsetList$ we store the indices of node outlinks which are mapped to $contributionList$ contiguously.

\vspace{-1.2\baselineskip}
\begin{algorithm}[!htb]
\caption{Baseline Algorithm~\cite{ajayPanyala}}
\scriptsize

\begin{algorithmic}[1]
\setlength{\lineskip}{3pt}
\State \textbf{Input:} Graph {G} $\leftarrow$ (V, E)
\Procedure{PageRank}{$G = (V, E)$}      
    \State  $ error \leftarrow 1 $
    \State  $ threshold \leftarrow  10^{-15} $
    \ForAll{ nodes $u_{i}$ $\vert i \in  \lbrace 1, ...,\textit{n}\rbrace$} \Comment{parallel}
        \State $pr(u_{i})\leftarrow \frac{1}{n}$
    \EndFor
    \While{$error > \  threshold$}
    \State $error  \leftarrow 0$
    
    \ForAll{ $u \in allVertices $ } \Comment{parallel}
    \State $contribution \leftarrow \displaystyle\frac{pr(u)}{outDeg(u)}$
    \ForAll{ $v \in outList(u)$ }
        \hspace{2cm} \State $contributionList(offsetList(v)) = contribution$
    \EndFor
    \EndFor
    
   \ForAll{ $u \in allVertices $ } \Comment{parallel}
        \State $prev \leftarrow pr(u) $
    \State $sum \leftarrow 0 $
    \ForAll{ $v \in inList(u)$ }
       \hspace{2cm} \State $sum = sum + contributionList(v)$ 
    \EndFor
    \State $pr(u) = \displaystyle\frac{(1-d)}{n} + (d*sum)$
    
    \hspace{1.5cm} \State$error=max(error,\left| prev-pr(u) \right|)$
    \EndFor
  \EndWhile
\EndProcedure

\end{algorithmic}
\label{algorithm:PageRankEdge}
\end{algorithm} 
\vspace{-1.2\baselineskip}

Algorithm~\ref{algorithm:PageRankEdge} iterates over \textit{while} loop till the termination condition becomes true. When the error value of the algorithm is less than the threshold value, the algorithm terminates. In the \textit{while} loop, the kernel has split into two phases. The first phase computes the contribution values for each node and assigns them to their respective out-going nodes and in the second phase the PageRank of the nodes is computed by fetching the pre-computed contribution values during each iteration. 

In Algorithm \ref{algorithm:PageRankEdge}, $inlist$ and $outlist$ are stored in AOS (Array of structures) format which consists of start and end positions of the contiguous buffers from $contributionList$ and $offsetList$ arrays, respectively.
$contributionList$ is an array, which stores contribution values of inlinks contiguously for each vertex of the graph. The contribution of a vertex is the value computed by dividing PageRank by its outdegree. $offsetList$ is an array, which stores the index values corresponding to $contributionList$ when we iterate over the outlinks of each vertex.

Current compiler heuristics for loop-fusion is generic enough to handle the loops with regular memory accesses. Loops with irregular memory access are considered bad candidates for loop fusion. In Algorithm~\ref{algorithm:PageRankEdge}, both the outer for loops (line $10$ and line $16$) share the same iteration space (iterates over all vertices of the graph), but irregular memory access to the $contributionList$ inside the inner loop (line $13$) prevents both the loops from being fused.

We consider the Edge-centric PageRank algorithm as our baseline algorithm and apply loop-fusion manually on the loops with irregular memory access. The modified algorithm results in Gauss-Seidel approximate version of PageRank algortihm.

In the later sections, we explain the details of our proposed technique, \fecpra{} algorithm and its correctness property.

\section{\fecpra{} Algorithm}
\label{sec:proposed_algorithm}
\subsection{Edge-centric + Loop-Fusion}

We propose Algorithm~\ref{algorithm:ApproxPR}, which is a new version of the PageRank algorithm. In this version, we fuse the loops manually, which are considered ``\textit{bad candidates}'' by the state-of-the-art compiler-heuristics because of their irregular structures. We show that our fusion technique results in faster convergence with improved data locality.

\begin{algorithm}[!htb]
\caption{\fecpra{} Algorithm}
\scriptsize

\begin{algorithmic}[1]
\setlength{\lineskip}{3pt}
\State \textbf{Input:} Graph {G} $\leftarrow$ (V, E)
\Procedure{PageRank}{$G = (V, E)$}      
    \State  $ error \leftarrow 1 $
    \State  $ threshold \leftarrow  10^{-15} $
    \ForAll{ nodes $u_{i}$ $\vert i \in  \lbrace 1, ...,\textit{n}\rbrace$} \Comment{parallel}
        \State $pr(u_{i})\leftarrow \frac{1}{n}$
    \EndFor
    \ForAll{ $u \in allVertices $ } \Comment{parallel}
    \State $contribution \leftarrow \displaystyle\frac{pr(u)}{outDeg(u)}$
    \ForAll{ $v \in outList(u)$ }
       \hspace{2cm} \State $contributionList(offsetList(v)) = contribution$
    \EndFor
    \EndFor
    \While{$error > \  threshold$}
    \State $error  \leftarrow 0$
    
   \ForAll{ $u \in allVertices $ } \Comment{parallel}
    \State $prev \leftarrow pr(u) $
    \State $sum \leftarrow 0 $
    \ForAll{ $v \in inList(u)$ }
       \hspace{2cm} \State $sum = sum + contributionList(v)$ 
    \EndFor
    \State $pr(u) = \displaystyle\frac{(1-d)}{n} + (d*sum)$
    \State $contribution \leftarrow \displaystyle\frac{pr(u)}{outDeg(u)}$
    \ForAll{ $v \in outList(u)$ }
       \hspace{2cm} \State $contributionList(offsetList(v)) = contribution$ 
    \EndFor
    \hspace{1.5cm} \State$error=max(error,\left| prev-pr(u) \right|)$
    \EndFor
  \EndWhile
\EndProcedure

\end{algorithmic}
\label{algorithm:ApproxPR}
\end{algorithm} 

In the first phase of the baseline algorithm, we first compute the contribution values and populate them to the $contributionList$ array and only then we enter the next phase. Whereas, in Algorithm~\ref{algorithm:ApproxPR}, after computing the PageRank of a node $u$, we immediately populate the newly computed contribution value to $u$'s out-neighbours. Updating the contributions right after computing the PageRank will result in erasing the older contribution values. In the later stage of the computation, if node $u$ wants to access the contribution value of a neighbour $v$ whose PageRank value is computed in the current iteration, then $u$ will access the contribution value updated in the current iteration. This property allows a node to access the incoming neighbours PageRank contribution from its previous iteration or the current iteration.

In Algorithm1, “read-from” and “write-to” PageRank array (Line11 and Line22) are from two different loops, so we do not preserve locality. While, in Algorithm2, “write-to” and “read-from” PageRank array (Line22 and Line23) are from two consecutive statements, with same array indices. This implies that the PageRank value is readily available in the register for computing the contribution value, which improves the temporal locality of our fused algorithm.

The difference between the PageRank value from the previous iteration and the current iteration is the node's error. As the algorithm proceeds iteratively, the error of each node decreases. Every node in the algorithm exhibits convergence property. We took advantage of this behaviour to apply the loop-fusion technique. Lemma~\ref{lemma1} gives the proof for the proposed technique.

\subsection{Faster Convergence on PageRank}
When compared with baseline, our proposed algorithm takes fewer iterations for attaining convergence which makes it approximate. In this section, we explain the approximation with an example.

\begin{figure}
    \centering
    \includegraphics[scale=0.6]{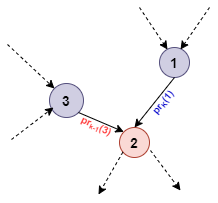}
    \caption{Subgraph showing  computation order for \fecpra{}}
    \label{fig:approximate}
\end{figure}

For example, in Figure~\ref{fig:approximate} the order of execution is, $pr(1)\rightarrow pr(2)\rightarrow pr(3)\rightarrow \cdots$. After computing $pr(1)$ in $k^{th}$ iteration, the $pr(2)$ computation needs to access the PageRank values of nodes \{$pr(1)$, $pr(3)$\}. Here, $pr_{k-1}(3)$ is the value computed from the $(k-1)^{th}$ iteration, while $pr(1)$ is already computed in the $k^{th}$ iteration. Though we use updated $pr_k(1)$ in each iteration, at the end $pr(2)$ converges with the same result as that of baseline.

The convergence of a node depends on the convergence of its neighbouring nodes. In the example shown in Figure~\ref{fig:approximate}, as we already discussed that $pr(2)$ accesses $pr_k(1)$ (updated value) and $pr_{k-1}(3)$ (old value). $pr_k(1)$ implies the node is advanced in the convergence step. Convergence of $pr(2)$ with $pr_k(1)$ is slightly better than with $pr_{k-1}(1)$.

In an iterative algorithm, accessing the previous iteration values and the values updated in the present iteration is termed as approximate technique. Gauss-Seidel method is the first technique that shows the power of this approximation. Our proposed approach is motivated by the Gauss-Seidel method's fundamental principle to use the latest updated values for PageRank computation.

\subsection{Race Conditions in \fecpra{}}
In Algorithm~\ref{algorithm:ApproxPR} there is a read-from contributionList in line 20 and write-to contributionList in line 25, which may lead to data races. 
In such a scenario, a thread can read the previous contribution value (or the one updated in the current iteration).
We use $C++$ vector data structure to store contributions and PageRank values which guarantee thread safety property when multiple threads try to perform operations in parallel. The thread-safety property of the $C++$ vector template is given in this link, https://en.cppreference.com/w/cpp/container.

Although $C++$ vector template guarantees thread safety, we can not achieve Sequential-consistency here as our \fecpra{} may change order of execution when compared with our baseline algorithm execution order. As our proposed algorithm is approximate and may use the updated values from current iteration to compute PageRank of a node, hence we may guarantee algorithm termination and correctness even after violating Sequential-consistency property.

As the data is not properly distributed across the threads, the cache miss rate of our parallel code is high, we are considering improving our current algorithm by reducing the communication cost between the threads and making use of cache effectively

\section{Proofs}
\label{sec:proofs}

In this section, using Lemma~\ref{lemma1}, we explain the characteristics of Loop-Fusion and statement reordering techniques when applied on loops with irregular memory accesses in the PageRank algorithm.
We prove that our proposed \fecpra{} algorithm will lead to a similar termination property as that of the baseline algorithm.

The basic notations for our proof are given below:

\begin{tabular}{rl}
     $n \leftarrow$&Total number of nodes \\
     $d \leftarrow$&Dampening factor ($0.85$) \\
     $\delta \leftarrow$&Threshold value ($10^{-15}$) \\
     $inlink(i) \leftarrow$&List of Inlinking neighbors to node i \\
     $d_o(j) \leftarrow$&Our-degree of node j \\
     $x_i(k\!+\!1) \leftarrow$&PageRank of node $i$ in ${(k\!+\!1)}^{th}$ iteration  \\
\end{tabular}

\begin{lemma}
\label{lemma1}
The algorithm eventually terminates with exact results when applied loop-fusion that yields faster convergence.

\end{lemma}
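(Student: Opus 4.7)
My plan is to reduce the claim to the classical convergence theory of stationary iterative methods, then fold in the specifics of the irregular-access / statement-reordering step. The PageRank update is the fixed-point iteration for the linear system $x = \tfrac{1-d}{n}\mathbf{1} + d M x$, where $M$ is the column-stochastic transition matrix with $M_{ij} = 1/d_o(j)$ if $j \in inlink(i)$ and $0$ otherwise. The baseline edge-centric algorithm is Jacobi applied to this system (it reads exclusively from the contribution array populated in the previous phase), while \fecpra{} is Gauss--Seidel, since once $pr(u)$ is recomputed the corresponding entries of $contributionList$ are immediately overwritten and any later vertex $w > u$ in the same sweep that has $u \in inlink(w)$ reads the freshly updated contribution. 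I would begin by making this splitting explicit: write $dM = L + U$ with $L$ strictly lower triangular and $U$ upper triangular in the traversal order, so that the baseline iteration is $x^{(k+1)} = b + (L+U)x^{(k)}$ and \fecpra{} is $(I - L)x^{(k+1)} = b + U x^{(k)}$, where $b = \tfrac{1-d}{n}\mathbf{1}$.

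Next I would establish that both iterations share the same unique fixed point, namely the true PageRank vector $x^{*}$. Existence and uniqueness follow from $\|dM\|_{1} = d < 1$ (since $M$ is column stochastic), which makes $I - dM$ invertible; the Gauss--Seidel form inherits the same fixed point by construction, because the splitting $dM = L+U$ is exact. This takes care of the ``exact results'' half of the lemma: if the iterates converge, they must converge to $x^{*}$, independently of whether we used Jacobi or Gauss--Seidel.

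For the convergence / termination half, I would show that the Gauss--Seidel iteration matrix $T_{GS} = (I-L)^{-1} U$ is a contraction. Since all entries of $M$ are non-negative and $d < 1$, the Stein--Rosenberg theorem applies: either $\rho(T_{GS}) = \rho(dM) = 0$ or $0 < \rho(T_{GS}) < \rho(dM) < 1$. In either case $\rho(T_{GS}) < 1$, which gives geometric decay of the error $\|x^{(k)} - x^{*}\|$ and, in the strict case, a rate strictly better than Jacobi; this is precisely the ``faster convergence'' claim. Because the per-iteration $error = \max_u |pr^{(k)}(u) - pr^{(k-1)}(u)|$ is bounded above by $\|x^{(k)} - x^{(k-1)}\|_{\infty}$, which also decays geometrically, there exists $K$ with $error < \delta$ for all $k \ge K$, so the outer \textbf{while} loop exits.

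The main obstacle is the last step: strictly speaking, the parallel implementation and the use of irregular indirection through $offsetList$ mean the ``update order'' is not a clean static triangular splitting, as race conditions in Section~4.3 can let a thread read either the new or the old contribution. I would handle this by appealing to the asynchronous / chaotic-relaxation framework (in the style of Silvestre et al.~\cite{DBLP:conf/amcc/SilvestreHS18}): as long as every component is updated infinitely often and the underlying map $x \mapsto b + dMx$ is a sup-norm contraction with constant $d < 1$, every admissible asynchronous schedule still converges to $x^{*}$. Since the outer loop visits every vertex each sweep, the ``infinitely often'' condition is satisfied trivially, and the sequential statement-reordered version is just one admissible schedule among these, which both yields Gauss--Seidel-style acceleration and preserves the exact fixed point, completing the proof.
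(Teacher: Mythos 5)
Your proposal is correct in substance but follows a genuinely different route from the one in the paper. The paper never sets up a matrix splitting or appeals to spectral radii: its argument works directly on the termination inequality $\max_i |x_i(k\!+\!1)-x_i(k)|\le\delta$, substitutes the update rule so that the constant $\tfrac{1-d}{n}$ cancels, and observes that the resulting expression for the fused (Gauss--Seidel) update splits into two sums --- one over in-neighbours $j\ge i$ involving $x_j(k)-x_j(k\!-\!1)$ and one over $j<i$ involving $x_j(k\!+\!1)-x_j(k)$ --- each of which has exactly the same form as the baseline's termination expression at iterations $k$ and $k\!+\!1$ respectively. From this structural identity it concludes that the fused algorithm terminates under the same per-node convergence condition as the baseline, and it defers the quantitative ``faster convergence'' claim to experiments. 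Your argument instead proves convergence from first principles: the splitting $dM=L+U$, the unique fixed point of the system governed by $I-dM$, and Stein--Rosenberg giving $\rho\bigl((I-L)^{-1}U\bigr)\le\rho(dM)<1$. That buys you an actual proof that the iteration converges and converges at least as fast --- the paper's manipulation only shows the stopping criteria are analogous, and its passage from the absolute value of a sum (Eqn.~6) to a sum of absolute values (Eqn.~7) is a sufficient-condition step rather than an equivalence --- at the cost of importing classical machinery the paper deliberately avoids. One step of yours needs repair: $dM$ is column-stochastic scaled by $d$, so $\|dM\|_1=d<1$, but $\|dM\|_\infty$ can exceed $1$ when a vertex has many in-neighbours of small out-degree, so the map $x\mapsto b+dMx$ is not a sup-norm contraction with constant $d$ as asserted in your asynchronous argument. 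The fix is standard --- since $dM\ge 0$ and $\rho(dM)<1$, there exists a weighted max-norm in which it is a contraction, which is what the Chazan--Miranker/Bertsekas totally-asynchronous convergence theorem actually requires --- but as written that step does not go through.
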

\begin{proof}
Eqn.~\ref{eq2} represents the Edge-centric PageRank computation of node $i$ in $(k+1)^{th}$ iteration and its termination condition is given by Eqn.~\ref{eq3}.
When the difference between the PageRank values of each node from two consecutive iterations becomes less than the given threshold ($\delta$), then the algorithm terminates.
\begin{equation}
    x_i(k\!+\!1) = \frac{1\!-\!d}{n} + d * \sum_{j \in inlink(i)}\frac{x_j(k)}{d_o(j)}
    \label{eq2}
\end{equation}
\begin{equation}
     \max(|x_i(k\!+\!1) - x_i(k)|) \le  \delta
     \label{eq3}
\end{equation}
In-order to satisfy the termination condition in Eqn.~\ref{eq3}, each node should follow Eqn.~\ref{eq4}. 
When all the nodes satisfy Eqn.~\ref{eq4}, only then the algorithm terminates.
\begin{equation}
 |x_i(k\!+\!1) - x_i(k)| \le \delta
 \label{eq4}
\end{equation}
Now, we analyse the termination property for one node. When we expand Eqn.~\ref{eq4} by substituting the values from Eqn.~\ref{eq2}, the resultant equation becomes,
\begin{equation}
    \bigg |\sum_{j \in inlink(i)}\frac{x_j(k)}{d_o(j)} * d -
    \sum_{j \in inlink(i)}\frac{x_j(k\!-\!1)}{d_o(j)} * d\bigg | \le  \delta
    \label{eq5}
\end{equation}
After expanding the summation from Eqn.~\ref{eq5}, the resultant becomes Eqn.~\ref{eq6}. Re-arranging the terms from Eqn.~\ref{eq6} will give us the following Eqn.~\ref{eq7}.
\begin{equation}
\begin{split}
    &\bigg | d * \bigg (\frac{x_{j_1}(k)}{d_o(j_1)} + \cdots \bigg )
    -\ d * \bigg (\frac{x_{j_1}(k\!-\!1)}{d_o(j_1)} + \cdots \bigg )\bigg | \le  \delta
\end{split}
    \label{eq6}
\end{equation}
\begin{equation}
    d * \frac{|x_{j_1}(k)\!-\!x_{j_1}(k\!-\!1)|}{d_o(j_1)} + d * \frac{|x_{j_2}(k)\!-\!x_{j_2}(k\!-\!1)|}{d_o(j_2)} + \cdots
    \le  \delta
    \label{eq7}
\end{equation}
Here, $d, d_o(j_1), d_o(j_2), d_o(j_3),\ldots, d_o(j_n)$ are constants and the difference between the PageRank values from two consecutive iterations is the variable part. In-order to satisfy the summation in Eqn.~\ref{eq7}, each entity in the summation should also be less than threshold ($\delta$).
\begin{equation}
    d * \frac{|x_{j_1}(k) - {x_{j_1}(k\!-\!1)}|}{d_o(j_1)} \le  \delta
    \label{eq8}
\end{equation}
In Eqn.~\ref{eq8}, $d$ and $d_o(j_1)$ are constants, while  $|x_{j_1}(k)\!-\!{x_{j_1}(k\!-\!1)}|$ varies. For the above inequality to be true, the difference between the PageRank values from two consecutive iterations ($|x_{j_1}(k)\!-\!{x_{j_1}(k\!-\!1)}|$) should decrease iteratively. The decrease in the error value iteratively implies the convergence of each node.

The primary condition that we derive is that each node should exhibit a convergence property. Now, to prove the validity of our proposed \fecpra{} algorithm, we should prove the node convergence property for the modified PageRank Eqn.~\ref{eq9}.
\begin{equation}
  x_i(k\!+\!1) = \frac{1\!-\!d}{n} + d * \bigg ( \sum_{\substack{j \in inlink(i) \\ i<=j}}\frac{x_j(k)}{d_o(j)} + \sum_{\substack{j \in inlink(i) \\ i>j}}\frac{x_j(k\!+\!1)}{d_o(j)}\bigg )
  \label{eq9}
\end{equation}
Eqn.~\ref{eq9} represents our proposed Gauss-Seidel version of PageRank equation. The termination condition of proposed algorithm given by  Eqn.~\ref{eq10} is derived by substituting Eqn.~\ref{eq9} in Eqn.~\ref{eq4}.
\begin{equation}
\begin{split}
    &\bigg |\frac{1\!-\!d}{n} + d * \bigg (\sum_{\substack{j \in inlink(i) \\ i<=j}}\frac{x_j(k)}{d_o(j)} + \sum_{\substack{j \in inlink(i) \\ i>j}}\frac{x_j(k\!+\!1)}{d_o(j)}\bigg ) - \bigg (\frac{1\!-\!d}{n} +\\
    &d * \bigg (\sum_{\substack{j \in inlink(i) \\ i<=j}}\frac{x_j(k\!-\!1)}{d_o(j)} + \sum_{\substack{j \in inlink(i) \\ i>j}}\frac{x_j(k)}{d_o(j)}\bigg )\bigg )\bigg | \le \delta
\end{split}
    \label{eq10}
\end{equation}
After re-arranging the terms in Eqn.~\ref{eq10}, the resultant equation is given by Eqn.~\ref{eq11}.
\begin{equation}
    d * \bigg |\sum_{\substack{j \in inlink(i) \\ i<=j}}\frac{x_j(k)\!-\!x_j(k\!-\!1)}{d_o(j)}\bigg | + d * \bigg |\sum_{\substack{j \in inlink(i) \\ i>j}}\frac{x_j(k\!+\!1)\!-\!x_j(k)}{d_o(j)}\bigg | \le  \delta
    \label{eq11}
\end{equation}
In Eqn.~\ref{eq11}, the first term  
$d * \bigg |\sum_{\substack{j \in inlink(i) \\ i<=j}}\frac{x_j(k)-x_j(k\!-\!1)}{d_o(j)}\bigg |$,
is similar to that of Eqn.~\ref{eq5} for $k^{th}$ iteration. Similarly, second term $d * \bigg |\sum_{\substack{j \in inlink(i) \\ i>j}}\frac{x_j(k\!+\!1)\!-\!x_j(k)}{d_o(j)}\bigg |$ is similar to Eqn.~\ref{eq5} for the $(k+1)^{th}$ iteration.
As Eqn.~\ref{eq5} results in Eqn.~\ref{eq8}, we can conclude that Eqn.~\ref{eq11} also results in Eqn.~\ref{eq8}, which implies our proposed Gauss-Seidel version algorithm with Eqn.~\ref{eq9} will ultimately results in Eqn.~\ref{eq8}.

We prove that the termination property of our proposed algorithm \fecpra{} is similar to that of the standard algorithm.  With our modified equation, we consider nodes that are one step ahead in convergence for error computation, which results in faster convergence of the overall PageRank algorithm. Our experiments also prove that our technique results in the same results and takes fewer iterations to converge than the baseline PageRank algorithm.
Hence, we conclude that our proposed technique is correct. And, with experiments, we prove that our technique results in a faster convergence.

\end{proof}
\section{Experimentation}
\label{sec:experimentation}

\subsection{System model/ Platform}
We ran all our experiments on an Intel-Xeon X5678 machine with 24 CPU cores running at a frequency of 3.07 GHz and 157GB of RAM with 32K, 256K and 12288K of L1, L2 and L3 cache sizes, respectively. We have written all our programs in C++ and compiled them using g++-10 compiler and POSIX multithreaded library by enabling O3 compiler optimization flag in ubuntu 20.04.

\begin{table*}[!htbp]
\tiny
\renewcommand{\arraystretch}{1.3}
\centering
\caption{List of benchmarks with num. of vertices ($|V|$) and num. of Edges ($|E|$) }
\label{tab:benchmark_list}
\begin{tabular}{||c|c|c||c|c|c||c|c|c||}
\hline
\multicolumn{6}{||c||}{ \cellcolor[HTML]{99BADD} \textbf{\textit{ Standard Graphs \cite{LAW}\cite{snapnets}}}} & \multicolumn{3}{c||}{\textbf{\cellcolor[HTML]{99BADD} \textit{Synthetic Graphs \cite{DBLP:conf/sdm/ChakrabartiZF04}}}}\\
        \hline
        \multicolumn{3}{||c||}{\cellcolor[HTML]{8FBC8F} \textbf{\textit{ Web Networks \cite{LAW}\cite{snapnets}}}} & \multicolumn{3}{c||}{\cellcolor[HTML]{8FBC8F} \textbf{\textit{ Social Networks \cite{LAW}\cite{snapnets}}}} & \multicolumn{3}{c||}{-} \\
        \hline
\cellcolor[HTML]{E7DBDB} \textbf{Input} & \cellcolor[HTML]{E7DBDB} \textbf{$|V|$} & \cellcolor[HTML]{E7DBDB} \textbf{$|E|$} & \cellcolor[HTML]{E7DBDB} \textbf{Input} & \cellcolor[HTML]{E7DBDB} \textbf{$|V|$} & \cellcolor[HTML]{E7DBDB} \textbf{$|E|$} & \cellcolor[HTML]{E7DBDB} \textbf{Input} & \cellcolor[HTML]{E7DBDB} \textbf{$|V|$} & \cellcolor[HTML]{E7DBDB} \textbf{$|E|$} \\
\hline
        enwiki-2013 & 4.2M & 101.3M &hollywood-2011 & 2.1M & 228.9M & RMAT-21 & 2M & 41.9M \\
        \hline
        indochina-2004 & 7.4M & 194.1M & twitter-2010-nat & 41.6M & 1468M & RMAT-22 & 4.1M & 83.8M \\
        \hline
        uk-2002-nat & 18.5M & 298.1M & soc-LiveJournal1 & 4.8M & 68.9M & RMAT-23 & 8.3M & 167.7M \\
        \hline
        arabic-2005-nat & 22.7M & 639.9M &soc-Epinions1 & 0.07M & 0.5M & D10 & 0.49M & 1M \\
        \hline
        webbase-2001-nat & 118.1M & 1019.9M & Slashdot0811 & 0.07M & 0.9 & D20 & 0.9M & 2M \\
        \hline
        it-2004-nat & 41.2M & 1150.7M & Slashdot0902 & 0.08M & 0.9M & D30 & 1.4M & 3M \\
        \hline
        sk-2005-nat & 50.6M & 1949.4M &  \multicolumn{3}{c||}{\cellcolor[HTML]{8FBC8F} \textbf{\textit{ Road Networks\cite{LAW}\cite{snapnets}}}} & D40 & 1.8M & 4M \\
        \hline
        web-Stanford & 0.2M & 2.3M & road-italy$\_$osm & 6.6M & 7M & D50 & 2.3M & 5M \\
        \hline
        web-Notre & 0.3M & 1.4M & great-britain$\_$osm & 7.7M & 8.2M & D60 & 2.7M & 6M \\
        \hline
        web-BerkStan & 0.6M & 7.6M & asia$\_$osm & 12M & 12.7M & D70 & 3.2M & 7M \\
        \hline
        web-Google & 0.8M & 5.1M & germany$\_$osm & 11.5M & 12.4M & - & - & -\\
        \hline
\end{tabular}
\end{table*}

\subsection{Dataset}
We verified our proposed algorithm with standard (LAW~\cite{LAW} and SNAP~\cite{snapnets}) and synthetic~\cite{DBLP:conf/sdm/ChakrabartiZF04} benchmarks. We collected graph datasets from reliable sources and categorized into three types, 1) Web Networks, 2) Social Networks, 3) Road Networks. Also, we manually generated the datasets and ran our experiments on synthetic graphs dataset generated from Graph500 R-MAT data generator~\cite{DBLP:conf/sdm/ChakrabartiZF04}. We present the details of the datasets in Table~\ref{tab:benchmark_list}.

\subsection{Results}

In this section, we discuss the experimental results with various metrics for \fecpra{} algorithm wrt Edge-centric algorithm~\cite{ajayPanyala}. We name sequential and parallel versions of the Edge-centric algorithm as Seq-EC and Par-EC. Similarly, we name our sequential and parallel versions of \fecpra{} algorithm as Seq-FusEd and Par-FusEd.

Figure~\ref{fig:combinedPlots} shows our proposed algorithm's effectiveness and approximate behaviour in terms of speedup, iterations, and \textit{L1-Norm} metrics. We show the performance variations by executing the sequential, parallel versions of baseline and \fecpra{} algorithms on the medium-sized benchmarks \textit{enwiki-2013} \cite{LAW}, \textit{indochina-2004} \cite{LAW} from web-networks and \textit{hollywood-2011} \cite{LAW} from social-networks with the varying threshold ranging in $\left [ 10^{-15}, 10^{-06} \right ]$.

Figure~\ref{fig:combinedPlots}a,~\ref{fig:combinedPlots}b and~\ref{fig:combinedPlots}c show the speedup obtained by Par-EC, Seq-FusEd and Par-FusEd algorithms wrt Seq-EC algorithm.
The plots indicate that the Par-FusEd algorithm always performs better than the other versions.

\begin{figure}[h]
    \centering
    \includegraphics[scale=0.28]{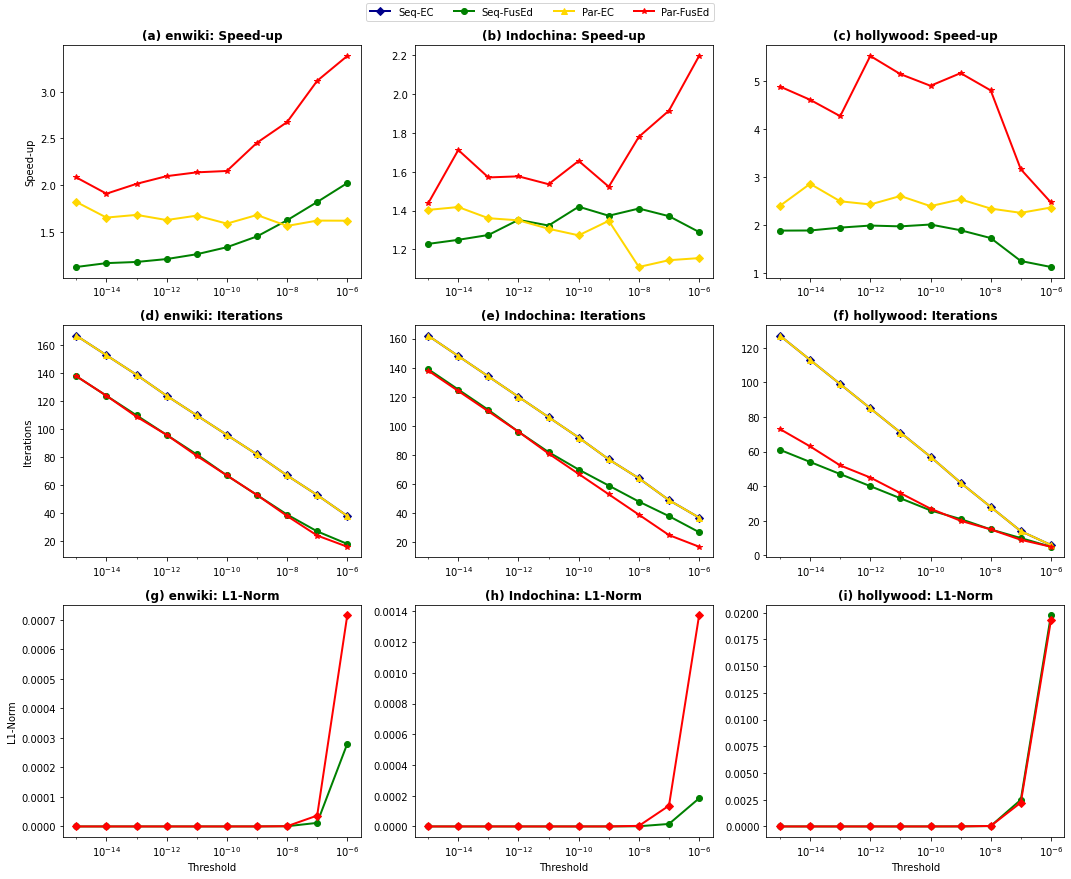}
    \caption{Detailed analysis of enwiki-2013, indochina-2004 and hollywood-2011 benchmark for speedup, number of iterations and L1-Norm with varying threshold ranging from $10^{-15}$ to $10^{-6}$ and number of threads fixed to 10 for parallel versions}
    \label{fig:combinedPlots}
\end{figure}

In the Seq-FusEd algorithm, nodes follow strict index ordering in computing the PageRank values. Suppose if $u$ and $v$ are two consecutive nodes with index values $i, j$ respectively, and $i < j$, then node $u$'s PageRank computation happens first, followed by node $v$. This property restricts the edge-centric PageRank algorithm to take the advantage of Gauss-Seidel approximation fully. 
In the Par-FusEd algorithm, we do not preserve the order for PageRank computations. This property will lead to asynchronicity within each iteration of PageRank computations. This behaviour is already studied in~\cite{Silvestre2018APA}, where they term it as an asynchronous Gauss-Seidel approximation. This technique results in faster convergence and improves speed-up over other versions.

Figure~\ref{fig:combinedPlots}d, \ref{fig:combinedPlots}e and \ref{fig:combinedPlots}f show the number of iterations taken by each program to terminate. With the lower threshold value, we obtain the higher precision PageRank values. With the decrease in the threshold value, the precision of each node PageRank value improves, which increases the number of iterations to terminate. We observe that Seq-FusEd and Par-FusEd algorithms take fewer iterations to converge on all the benchmarks as both algorithms exhibit approximate behaviour.

Figure~\ref{fig:combinedPlots}g, \ref{fig:combinedPlots}h and \ref{fig:combinedPlots}i show the variation in \textit{L1-Norm} with the change in threshold. 
Higher threshold value results in higher \textit{L1-Norm} for Seq-FusEd and Par-FusEd algorithms. With the increase in the threshold value, the \textit{L1-Norm} value increases. After the threshold goes below $10^{-8}$, the error value becomes negligible for all three benchmarks. We compute the \textit{L1-Norm} by calculating the Manhattan distance between Seq-EC and our proposed \fecpra{} algorithm. 
We conclude that fusing the two loops from Algorithm-1 will lead to better performance with lower threshold values.
Also, this technique is helpful when we have higher precision requirements.

\begin{figure}[h]
    \centering
    \includegraphics[scale=0.3]{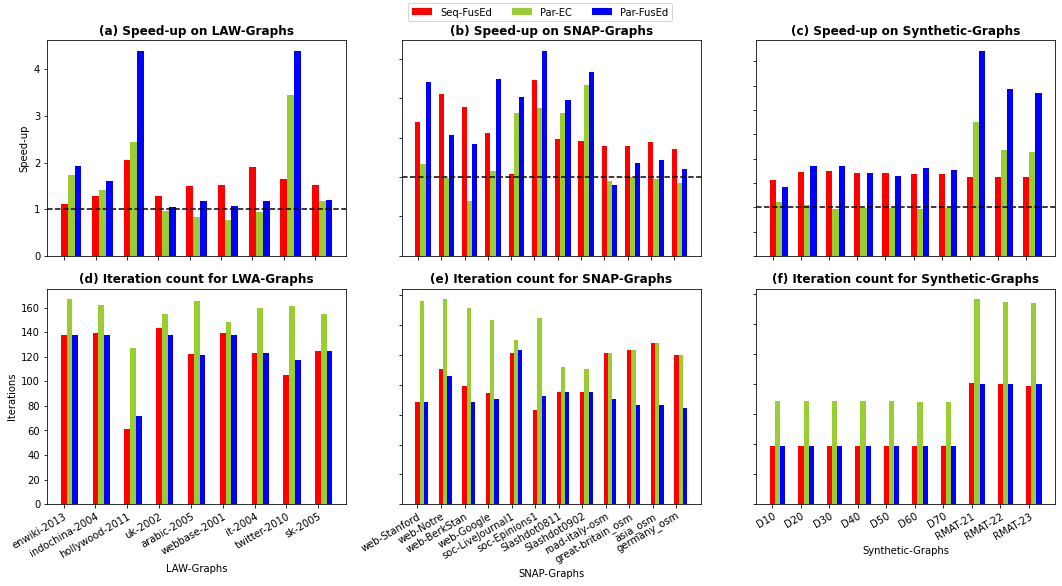}
    \caption{Detailed analysis of LAW, SNAP and Synthetic datasets for speedup and iterations with threshold value fixed to $10^{-15}$ and number of threads fixed to 10 for parallel versions.}
    \label{fig:4}
\end{figure}

For the rest of the experiments, we fix the threshold value to $10^{-15}$. Figure~\ref{fig:4}a,~\ref{fig:4}b and~\ref{fig:4}c show the speedup obtained by Par-EC, Seq-FusEd and Par-FusEd algorithms wrt Seq-EC algorithm on LAW~\cite{LAW}, SNAP~\cite{snapnets} and Synthetic datasets~\cite{DBLP:conf/sdm/ChakrabartiZF04} respectively. We observe that \fecpra{} algorithm performs better when compared with baseline algorithm for synthetic benchmarks. The results also show some performance decrease of our proposed fusion technique on specific benchmarks due to improper scheduling of the thread computations, which we will address in our future work. 

Figure~\ref{fig:4}d,~\ref{fig:4}e and~\ref{fig:4}f corresponds to the number of iterations taken by each version to converge. In LAW~\cite{LAW} and Synthetic benchmarks~\cite{DBLP:conf/sdm/ChakrabartiZF04}, Seq-FusEd and Par-FusEd take fewer iterations to converge than the Par-EC algorithm for all datasets. Since Seq-EC and Par-EC will always take the same number of iterations to converge, we did not include Seq-EC results to simplify the plots. In web-networks and social-networks from the SNAP~\cite{snapnets} benchmark, iterations variation is similar to LAW/Synthetic. Seq-FusEd takes the same number of iterations as the Par-EC algorithm for road-networks, and Par-FusEd takes fewer iterations to converge.

\begin{figure}[h]
    \centering
    \includegraphics[scale=0.38]{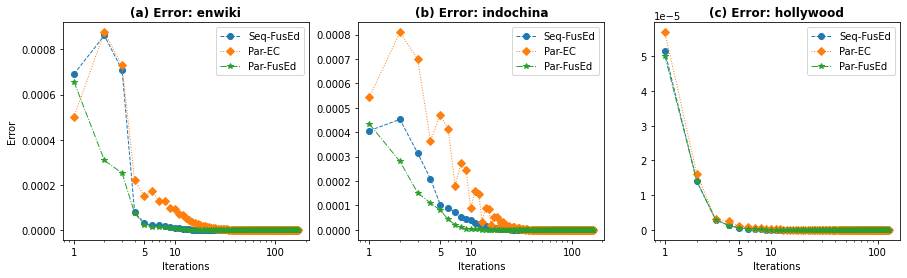}
    \caption{Detailed analysis of enwiki-2013, indochina-2004 and hollywood-2011 benchmark for Error variation with threshold value fixed to $10^{-15}$ and number of threads fixed to 10 for parallel versions}
    \label{fig:error_plots}
\end{figure}

Figure~\ref{fig:error_plots}a,~\ref{fig:error_plots}b and~\ref{fig:error_plots}c show the decrease in error value with the increase in iteration number on \textit{enwiki-2013}~\cite{LAW}, \textit{indochina-2004}~\cite{LAW} and \textit{hollywood-2011}~\cite{LAW} benchmarks, respectively. The Gauss-Seidel property for our proposed \fecpra{} increases convergence rate of the algorithm. Especially for Par-FusEd, the asynchronicity of Gauss-Seidel approximation in each iteration will allow more number of nodes to access the contribution values computed from current iteration, resulting in better convergence than Seq-FusEd. 
We conclude that our proposed \fecpra{} algorithm convergence rate is high and converges faster.

\begin{figure}[h]
    \centering
    \includegraphics[scale=0.38]{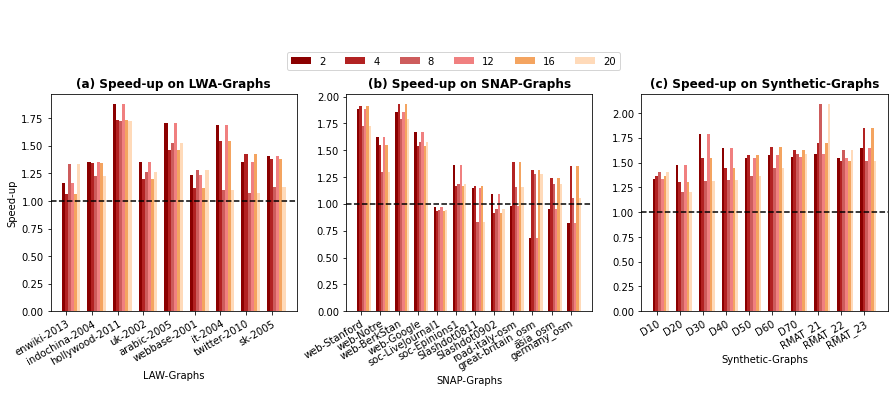}
    \caption{Detailed analysis of LAW, SNAP and Synthetic graphs for speedup on Par-FusEd wrt Par-EC with threshold value fixed to $10^{-15}$ and number of threads varying from \{2, 4, 8, 12, 16, 20\}}
    \label{fig:analysis_WrtBaseline}
\end{figure}

The improvement in speedup on Par-FusEd wrt Par-EC for various thread numbers is shown in Figure~\ref{fig:analysis_WrtBaseline}. For all the thread numbers varying from $\{2, 4, 8, 12,\\ 16, 20\}$, we see that \fecpra{} algorithm always performs better than the Edge-Centric baseline algorithm for LAW, SNAP and Synthetic graphs shown in Figure~\ref{fig:analysis_WrtBaseline}a, \ref{fig:analysis_WrtBaseline}b and \ref{fig:analysis_WrtBaseline}c respectively.
In LAW and Synthetic Graphs, varying the thread number will always leads to performance improvement.
For SNAP datasets, except for soc-Livejournal1, all other datasets are showing significant speed-up with varying threads.
The performance improvement with varying thread number is not uniform because of inefficient scheduling of data to each thread. We consider the scheduling problem as our future work.

\section{Conclusion and Future work}
\label{sec:conclusion}
This paper proposes the compiler-optimization based approximate technique, \fecpra{} algorithm that improves locality and reduces irregular memory accesses. Applying Loop-Fusion on Edge-centric PageRank algorithm with irregular memory accesses leads to \textit{faster convergence} in significantly \textit{less time} with negligible error in \textit{L1-norm} on both standard and synthetic benchmarks.

Our future work focuses on data scheduling for proper load-balancing and applying other compiler optimization techniques like loop-perforation to improve our proposed \fecpra{} algorithm on varied graph structures and storage formats. 
Although \fecpra{} is limited to shared memory architecture, it can be extended to the distributed environment with minimalist algorithmic changes.
Further, we plan to develop non-blocking PageRank variant while incorporating these techniques.

\bibliographystyle{splncs04}
\bibliography{References}

\begin{thebibliography}{10}
\providecommand{\url}[1]{\texttt{#1}}
\providecommand{\urlprefix}{URL }
\providecommand{\doi}[1]{https://doi.org/#1}

\bibitem{IccURL}
Intel c compiler software (icc).
  \url{https://software.intel.com/content/www/us/en/\\develop/tools/oneapi/components/dpc-compiler.html\#gs.dp3xbh},
  [Accessed 14-10-2021]

\bibitem{LAW}
Laboratory for web algorithmics (law). In: http://law.di.unimi.it

\bibitem{Gauss-Seidel}
Arasu, A., Novak, J., Tomkins, A., Tomlin, J.: Pagerank computation and the
  structure of the web: Experiments and algorithms  (05 2002)

\bibitem{DBLP:conf/icpp/BaderM06a}
Bader, D.A., Madduri, K.: Parallel algorithms for evaluating centrality indices
  in real-world networks. In: 2006 International Conference on Parallel
  Processing {(ICPP} 2006), 14-18 August 2006, Columbus, Ohio, {USA}. pp.
  539--550. {IEEE} Computer Society (2006). \doi{10.1109/ICPP.2006.57},
  \url{https://doi.org/10.1109/ICPP.2006.57}

\bibitem{1950ASAJ...22..725B}
{Bavelas}, A.: {Communication Patterns in Task-Oriented Groups}. Acoustical
  Society of America Journal  \textbf{22}(6), ~725 (Jan 1950).
  \doi{10.1121/1.1906679}

\bibitem{DBLP:conf/ipps/BeamerAP17}
Beamer, S., Asanovic, K., Patterson, D.A.: Reducing pagerank communication via
  propagation blocking. In: 2017 {IEEE} International Parallel and Distributed
  Processing Symposium, {IPDPS} 2017, Orlando, FL, USA, May 29 - June 2, 2017.
  pp. 820--831. {IEEE} Computer Society (2017). \doi{10.1109/IPDPS.2017.112},
  \url{https://doi.org/10.1109/IPDPS.2017.112}

\bibitem{10.1145/1052934.1052938}
Bianchini, M., Gori, M., Scarselli, F.: Inside pagerank. ACM Trans. Internet
  Technol.  \textbf{5}(1),  92–128 (Feb 2005). \doi{10.1145/1052934.1052938},
  \url{https://doi.org/10.1145/1052934.1052938}

\bibitem{Bonacich}
Bonacich, P.: {Power and Centrality: A Family of Measures}. American Journal of
  Sociology  \textbf{92}(5),  1170--1182 (1987). \doi{10.2307/2780000},
  \url{http://dx.doi.org/10.2307/2780000}

\bibitem{DBLP:conf/sdm/ChakrabartiZF04}
Chakrabarti, D., Zhan, Y., Faloutsos, C.: {R-MAT:} {A} recursive model for
  graph mining. In: Berry, M.W., Dayal, U., Kamath, C., Skillicorn, D.B. (eds.)
  Proceedings of the Fourth {SIAM} International Conference on Data Mining,
  Lake Buena Vista, Florida, USA, April 22-24, 2004. pp. 442--446. {SIAM}
  (2004). \doi{10.1137/1.9781611972740.43},
  \url{https://doi.org/10.1137/1.9781611972740.43}

\bibitem{Cormen2001introduction}
Cormen, T.H., Leiserson, C.E., Rivest, R.L., Stein, C.: Introduction to
  Algorithms. The MIT Press, 2 edn. (2001)

\bibitem{Hemlatha.et.al.}
Eedi, H., Peri, S., Ranabothu, N., Utkoor, R.: An efficient practical
  non-blocking pagerank algorithm for large scale graphs. In: 2021 29th
  Euromicro International Conference on Parallel, Distributed and Network-Based
  Processing (PDP). pp. 35--43 (2021). \doi{10.1109/PDP52278.2021.00015}

\bibitem{Freeman1977Set}
Freeman, L.C.: {A Set of Measures of Centrality Based on Betweenness}.
  Sociometry  \textbf{40}(1),  35--41 (Mar 1977). \doi{10.2307/3033543},
  \url{http://dx.doi.org/10.2307/3033543}

\bibitem{STIC-D}
Garg, P., Kothapalli, K.: Stic-d: Algorithmic techniques for efficient parallel
  pagerank computation on real-world graphs. In: Proceedings of the 17th
  International Conference on Distributed Computing and Networking. ICDCN '16,
  Association for Computing Machinery, New York, NY, USA (2016).
  \doi{10.1145/2833312.2833322}, \url{https://doi.org/10.1145/2833312.2833322}

\bibitem{llvm}
Lattner, C., Adve, V.: Llvm: a compilation framework for lifelong program
  analysis amp; transformation. In: International Symposium on Code Generation
  and Optimization, 2004. CGO 2004. pp. 75--86 (2004).
  \doi{10.1109/CGO.2004.1281665}

\bibitem{snapnets}
Leskovec, J., Krevl, A.: {SNAP Datasets}: {Stanford} large network dataset
  collection. \url{http://snap.stanford.edu/data} (Jun 2014)

\bibitem{DBLP:conf/uai/LowGKBGH10}
Low, Y., Gonzalez, J., Kyrola, A., Bickson, D., Guestrin, C., Hellerstein,
  J.M.: Graphlab: {A} new framework for parallel machine learning. In:
  Gr{\"{u}}nwald, P., Spirtes, P. (eds.) {UAI} 2010, Proceedings of the
  Twenty-Sixth Conference on Uncertainty in Artificial Intelligence, Catalina
  Island, CA, USA, July 8-11, 2010. pp. 340--349. {AUAI} Press (2010)

\bibitem{DBLP:conf/sigmod/MalewiczABDHLC10}
Malewicz, G., Austern, M.H., Bik, A.J.C., Dehnert, J.C., Horn, I., Leiser, N.,
  Czajkowski, G.: Pregel: a system for large-scale graph processing. In:
  Elmagarmid, A.K., Agrawal, D. (eds.) Proceedings of the {ACM} {SIGMOD}
  International Conference on Management of Data, {SIGMOD} 2010, Indianapolis,
  Indiana, USA, June 6-10, 2010. pp. 135--146. {ACM} (2010).
  \doi{10.1145/1807167.1807184}, \url{https://doi.org/10.1145/1807167.1807184}

\bibitem{DBLP:conf/sosp/NguyenLP13}
Nguyen, D., Lenharth, A., Pingali, K.: A lightweight infrastructure for graph
  analytics. In: Kaminsky, M., Dahlin, M. (eds.) {ACM} {SIGOPS} 24th Symposium
  on Operating Systems Principles, {SOSP} '13, Farmington, PA, USA, November
  3-6, 2013. pp. 456--471. {ACM} (2013). \doi{10.1145/2517349.2522739},
  \url{https://doi.org/10.1145/2517349.2522739}

\bibitem{Page1999ThePC}
Page, L., Brin, S., Motwani, R., Winograd, T.: The pagerank citation ranking :
  Bringing order to the web. In: WWW 1999 (1999)

\bibitem{ajayPanyala}
Panyala, A., Subasi, O., Halappanavar, M., Kalyanaraman, A., Chavarria-Miranda,
  D., Krishnamoorthy, S.: Approximate computing techniques for iterative graph
  algorithms. In: 2017 IEEE 24th International Conference on High Performance
  Computing (HiPC). pp. 23--32 (2017). \doi{10.1109/HiPC.2017.00013}

\bibitem{RePEc:spr:psycho:v:31:y:1966:i:4:p:581-603}
Sabidussi, G.: The centrality index of a graph. Psychometrika  \textbf{31}(4),
  581--603 (1966),
  \url{https://EconPapers.repec.org/RePEc:spr:psycho:v:31:y:1966:i:4:p:581-603}

\bibitem{DBLP:conf/amcc/SilvestreHS18}
Silvestre, D., Hespanha, J.P., Silvestre, C.: A pagerank algorithm based on
  asynchronous gauss-seidel iterations. In: 2018 Annual American Control
  Conference, {ACC} 2018, Milwaukee, WI, USA, June 27-29, 2018. pp. 484--489.
  {IEEE} (2018). \doi{10.23919/ACC.2018.8431212},
  \url{https://doi.org/10.23919/ACC.2018.8431212}

\bibitem{Silvestre2018APA}
Silvestre, D., Hespanha, J.P., Silvestre, C.: A pagerank algorithm based on
  asynchronous gauss-seidel iterations. 2018 Annual American Control Conference
  (ACC) pp. 484--489 (2018)

\bibitem{DBLP:books/crc/Compiler07}
Srikant, Y.N., Shankar, P. (eds.): The Compiler Design Handbook: Optimizations
  and Machine Code Generation, Second Edition. {CRC} Press (2007)

\bibitem{10.5555/3133373}
Stallman, R.M., Community, G.D.: GCC 7.0 Manual 1/2 (Volume 1). Samurai Media
  Limited, London, GBR (2016)

\bibitem{DBLP:conf/cidr/WangXDG13}
Wang, G., Xie, W., Demers, A.J., Gehrke, J.: Asynchronous large-scale graph
  processing made easy. In: {CIDR} 2013, Sixth Biennial Conference on
  Innovative Data Systems Research, Asilomar, CA, USA, January 6-9, 2013,
  Online Proceedings. www.cidrdb.org (2013),
  \url{http://cidrdb.org/cidr2013/Papers/CIDR13\_Paper58.pdf}

\bibitem{8091048}
Zhou, S., Lakhotia, K., Singapura, S.G., Zeng, H., Kannan, R., Prasanna, V.K.,
  Fox, J., Kim, E., Green, O., Bader, D.A.: Design and implementation of
  parallel pagerank on multicore platforms. In: 2017 IEEE High Performance
  Extreme Computing Conference (HPEC). pp.~1--6 (2017).
  \doi{10.1109/HPEC.2017.8091048}

\end{thebibliography}

\end{document}